\def\be{\begin{equation}}
\def\ee{\end{equation}}
\def\ba{\begin{array}{c}}
\def\ea{\end{array}}
\def\ben{$$}
\def\een{$$}
\newcommand{\bea}{\begin{eqnarray}}
\newcommand{\eea}{\end{eqnarray}}
\newcommand{\bbr}{\br\!\br}
\newcommand{\kkt}{\kt\!\kt}
\newcommand{\kt}{\rangle}
\newcommand{\br}{\langle}
\newtheorem{thm}{Theorem}
\newtheorem{lemma}[thm]{Lemma}
\newenvironment{proof}{\noindent {\bf Proof}}{\hfill$\square$\vspace{3mm}\endtrivlist}
\begin{document}

\titlepage

 \begin{center}{\Large \bf

Construction of maximally non-Hermitian potentials
under unbroken ${\cal PT}-$symmetry constraint

  }\end{center}

 \begin{center}

\vspace{8mm}

  {\bf Miloslav Znojil} $^{1,2,3}$

\end{center}

\vspace{8mm}

  $^{1}$
 {The Czech Academy of Sciences,
 Nuclear Physics Institute,
 Hlavn\'{\i} 130,
250 68 \v{R}e\v{z}, Czech Republic, {e-mail: znojil@ujf.cas.cz}}


 $^{2}$
 {Department of Physics, Faculty of
Science, University of Hradec Kr\'{a}lov\'{e}, Rokitansk\'{e}ho 62,
50003 Hradec Kr\'{a}lov\'{e},
 Czech Republic}

  $^{3}$
Institute of System Science, Durban University of Technology,
Durban, South Africa


\subsection*{Keywords}.

.

systems with spontaneously unbroken PT-symmetry;

non-Hermitian exceptional-points degeneracy extreme;

access to quantum phase transition;

discrete imaginary potentials;


\subsection*{Abstract}

A family of discrete
Schr\"{o}dinger equations
with imaginary and maximally non-Hermitian
multiparametric
potentials $V(x,A,B,\ldots)$
is studied.
In the
domain of unitarity-compatible parameters ${\cal D}$
(known as the
domain of spontaneously unbroken ${\cal PT}-$symmetry)
the reality
of all of the bound-state energies $E_n(A,B,\ldots)$
survives up to
the maximally non-Hermitian
``exceptional-point'' (EP)
extreme with
parameters $A^{(EP)},B^{(EP)},\ldots$.
The computer-assisted
proof of existence and the
symbolic-manipulation
localization of such a
spectral-degeneracy limit
are sampled showing that their
complexity grows quickly with
the number of para\-meters.

\newpage

\section{Introduction}

The concept of ${\cal PT}-$symmetry of
a quantum Hamiltonian
$H$
(i.e., relation $H{\cal PT}={\cal PT}H$
with suitable ${\cal P}$ and ${\cal T}$)
became extremely popular
after the Bender's and Boettcher's
well known and
influential
analysis \cite{BB} of several
ordinary differential ${\cal PT}-$symmetric
Schr\"{o}dinger equations
 \be
 -\psi''(x) +V(x)\psi(x)=E\psi(x)\,
 \label{BBSE}
 \ee
in which
the components of symmetry
${\cal P}$ and ${\cal T}$
represented the
parity and time reversal, respectively.
On these \textcolor{black}{grounds} these authors proposed that
even when the
Hamiltonian
happens to be manifestly non-Hermitian,
the spectrum of energies
can remain real.
The
${\cal PT}-$symmetry
can be then declared (spontaneously)
unbroken \cite{Carl}
and one can speak about various innovative ``analytic continuations
of conventional theories'' \cite{BB}.

The Bender's and Boettcher's
choice of their complex
but still safely confining potentials
has been initially inspired by
the properties of the
imaginary cubic oscillator (ICO) with
 \be
 V(x)=V^{(ICO)}(x)={\rm i}x^3\,
 \label{lico}
 \ee
and by
the Bessis' never published
conjecture of reality of the related
bound-state-like energies.
Incidentally, the
Bessis' intuitive
reluctance of publishing
his conjecture
\cite{DB}
found an {\it a posteriori\,} validation
in 2012 when
Siegl with Krej\v{c}i\v{r}\'{\i}k
\cite{Siegl}
formulated their statement and proof
that
``there is no quantum mechanical Hamiltonian''
associated with the
``${\cal PT}-$symmetric imaginary cubic oscillator''
represented by Eqs.~(\ref{BBSE}) + (\ref{lico}).
In 2019, moreover,
G\"{u}nther with Stefani \cite{Uwe} added that
``what is still lacking is a simple physical explanation
scheme for the non-Rieszian behavior of the [ICO]
eigenfunction sets''.

An explanation can be found provided by
older literature
(cf. the Dieudonn\'{e}'s
critical note \cite{Dieudonne})
and, in particular, by one of the
physics-oriented but still sufficiently rigorous
review \cite{Geyer}
in which the authors
recommended to consider just a
bounded-operator
subclass
of the general non-Hermitian Hamiltonians
with real spectra.
Obviously, IOP does not satisfy such a condition
(in this respect, interested readers might also check
our most recent related comments in \cite{foundations}).

In \cite{Geyer}, Scholtz et al
established the
``normal quantum-mechanical interpretation''
of the latter \textcolor{black}{bounded} Hamiltonians.
We
decided to follow the recommendation.
In what follows we will replace, for this reason,
the ordinary differential Eq.~(\ref{BBSE})
by its difference-equation simplification
(cf. the first half of section \ref{themo} for details).
Due to this simplification, we will be able to
initiate the study of
at least some of the
above-mentioned ICO-connected pathologies
from a different and, perhaps, slightly
broader methodical perspective.
The main goal
of our approach aimed at
an insight in
some aspects of
these pathologies
will be formulated in the second half
of section \ref{themo}.

In section \ref{obser}
we will
recall
a few comprehensive reviews of the field \cite{ali,book,SIGMA}.
We will summarize some of the basic theoretical ingredients
forming the
necessary update of the formulation of unitary
quantum mechanics
using the fundamental concept of unbroken ${\cal PT}-$symmetry.
We will illustrate this update by its application to
the special case of
our toy-model
Hamiltonian of secton \ref{themo}
defined as acting in
$N-$dimensional Hilbert space with $N=2$.

In the literature, paradoxically, one finds practically no
analogous results
with the choice of $N=3$.
An explanation is provided in
section \ref{locika} where we
turn attention to the model
of section \ref{themo} with $N=3$.
We will show there that
some of the features of
such a next-to-trivial model
(which still varies with just a single parameter)
are more subtle than expected.

What follows is a systematic study
of our model with  $N=4$ and $N=5$
(in section \ref{relocika}) and with  $N\geq 6$
(in section \ref{rererelocika} covering some specific features of
the larger-matrix models).
A brief discussion and conclusions are then added in section
\ref{presumma}.


\section{Bound states in purely imaginary potentials\label{themo}}

The methodical guidance
as provided by multiple reviews \cite{Carl,ali,book}
opens the way towards
a consistent return of many unitary
(i.e., so called ``closed'')
quantum systems and models using
non-Hermitian but
${\cal PT}-$symmetric
Hamiltonians $H$
back to the
safe and tested framework of
conventional textbooks.
An explicit illustration of such a statement
will be offered
in section \ref{obser}. A
compact outline of the abstract theory
will be facilitated there
by several restrictions of
our present specific choice
to the model.

\subsection{Simplification:
Discrete version of Schr\"{o}dinger equation}

First of all, we decided to
replace, for our present purposes,
the ordinary differential ICO
Schr\"{o}dinger equation (or,
in general,
any ${\cal PT}-$symmetric
Eq.~(\ref{BBSE})
with any local potential)
by its difference-equation analogue
 \be
 -
 \psi_n(x_{k-1}) +2\,\psi_n(x_k)
 -
 \psi_n(x_{k+1})
 +V(x_k)\,\psi_n(x_k)=E_n\,\psi_n(x_k)\,,
 \ \ \ \ \ n=1, 2,\ldots\,, N\,.
 \label{divSE}
 \ee
The wave function
is assumed to
live here on a discrete,
finite and equidistant grid-point lattice
of coordinates
 \be
 x_k = x_0+k\,\delta\,,\ \ \ \ k=0,1,\ldots, N+1
 \label{lakoon}
 \ee
with $\delta=1$.
The physical asymptotics of wave functions \textcolor{black}{are}
specified by the most common
Dirichlet
boundary conditions,
 \be
 \psi_n(x_0)= \psi_n(x_{N+1})=
 0\,.
 \ee
This means that
in the search for solutions
we will just have to diagonalize
the $N$ by $N$ matrix Hamiltonians
which are defined as sums
 \be
 H^{(N)}(A,B,\ldots)=\triangle^{(N)}+V^{(N)}(A,B,\ldots)
 \label{hamacek}
 \ee
of
a (shifted) one-dimensional discrete Laplacean
 \be
 \triangle^{(N)} =
 \left[ \begin {array}{ccccc}
  0&-1&0
 &\ldots&0
 \\
 {}-1&0&-1&\ddots&\vdots
 \\
 {}0&-1&\ddots&\ddots
 &0
 \\
 {}\vdots&\ddots&\ddots&0&-1
 \\
 {}0&\ldots&0&-1&0
 \end {array} \right]\,.
 \label{a0t}
 \ee
with any suitable ${\cal PT}-$symmetric
interaction potential.

For our present purposes we will feel guided by the
continuous
ICO problem
so that we will restrict our attention, for the sake of
definiteness, just to the models characterized by the
purely imaginary local interactions,
 \be
 V^{(N)}(A,B,\ldots)=
 \left[ \begin {array}{ccccc}
  -i\,A&0&0
 &\ldots&0
 \\
 {}0&-i\,B&0&\ddots&\vdots
 \\
 {}0&0&\ddots&\ddots
 &0
 \\
 {}\vdots&\ddots&\ddots&i\,B&0
 \\
 {}0&\ldots&0&0&i\,A
 \end {array} \right]\,.
 \label{aVto}
 \ee
The resulting tridiagonal-matrix Hamiltonian will
vary, in general, with as many as $
[N/2]$ variable real parameters.
In what follows, their variability
will serve, first of all, as a tool
of a controlled increase of non-Hermiticity
of the Hamiltonian.

The unusual multiparametric flexibility
of our model (\ref{hamacek}) -- (\ref{aVto})
will enable us to address
and discuss several currently
open questions, with our main
attention being paid
to the possible, ${\cal PT}-$symmetry mediated
compatibility between a maximalization
of the non-Hermiticity
under the constraint of
survival of the
unitarity of the evolution or, in technical terms, of the
reality and non-degeneracy of the spectrum.

\subsection{Extremes of non-Hermiticity}

Contradictory as the latter compatibility may seem to be,
it may be characterized as one of the key innovations
provided by the use of the notion of unbroken
${\cal PT}-$symmetry.
It
opens the possibility of
various fairly counterintuitive forms of
correspondence between
a parameter-controlled increase of
non-Hermiticity of the Hamiltonian
(whatever such an increase may mean)
and the robustness (or survival)
of the reality of the spectrum,
whatever such a manifestation
of the unbroken ${\cal PT}-$symmetry
might mean
in operational sense.

Our main ambition will be
to treat the reality of the spectrum
as a fixed constraint.
Under this constraint
we will try to specify
a subdomain of
parameters $A, B, \ldots$
in which the non-Hermiticity would be,
in some reasonably well defined sense,
maximal.
We will see that even
at the first few nontrivial $N=2,3,\ldots$
the variability of interaction (\ref{aVto})
will
open the way towards the answer.
Its essence
will be found to lie in
the emergence of the Kato's
exceptional point (EP, \cite{Kato})
of a maximal order $N$ (i.e., EPN)
at the related degenerate bound-state energy $E=E^{(EPN)}=0$.

The latter type of a very specific
non-Hermitian degeneracy
can be perceived as
mimicking
the above mentioned
\textcolor{black}{ICO-related}
non-Rieszian behavior
of the
basis in Hilbert space
\textcolor{black}{with $N=\infty$. A}
successful
simulation of
such an ``intrinsic exceptional point'' (IEP, \cite{Siegl})
at a finite $N$
will require a localization
of the  parameters
$A=A^{(EPN)},B=B^{(EPN)},\ldots$ at which
the Hamiltonian matrix
 \ben
 H^{(EPN)}=\triangle^{(N)}+V^{(EPN)}(A^{(EPN)},B^{(EPN)},\ldots)
  \een
becomes
non-diagonalizable,
canonically represented
by the
$N$ by $N$ Jordan matrix
 $$
 J^{(N)}=\left[ \begin {array}{ccccc}
                     0&1&0&\ldots&0
 \\{}0&0&1&\ddots&\vdots
 \\{}\vdots&0&\ddots&\ddots&0
 \\{}\vdots&\vdots&\ddots&0&1
 \\{}0&0&\ldots&0&0
 \end {array}
 \right]\,
 $$
in such a way that
 \be
 J^{(N)}=[{Q}^{(N)}]^{-1}\,\cdot H^{(EPN)}\,\cdot {Q}^{(N)}\,
\label{relaJ}
 \ee
where the isospectrality/similarity map ${Q}^{(N)}$ is
called transition matrix.

\section{Unitary
theory\label{obser}}

The EPN-related
non-diagonalizability of $H^{(EPN)}$
(i.e., the existence of the transition matrices in (\ref{relaJ}))
can only be achieved
at the parameters  $A^{(EPN)},B^{(EPN)},\ldots$
which can significantly deviate
from the values of the above-mentioned
ICO potential $V(x)=ix^3$
at the respective grid-point coordinates $x_j$.
In this sense the possibility of a direct connection
between the two models remains an open question
at present.
Nevertheless,
according to the abstract theory
which we are now going to outline,
the first steps towards an answer may be seen
in
the possibility of
making our discrete models hiddenly Hermitian.,

\subsection{The notion of hidden Hermiticity\label{suobser}}

The variability of parameters $A, B, \ldots$
enables us to \textcolor{black}{achieve,}
in principle at least,
the reality of all of
the eigenvalues $E=E(A, B, \ldots)$ of our
toy model Hamiltonian
$H=H^{(N)}(A,B,\ldots)$.
We may assume that
there exists a non-empty domain of parameters ${\cal D}$
in which the
${\cal PT}-$symmetry remains unbroken,
i.e., in which
the spectrum
remains real and non-degenerate.
Then, the key task
\textcolor{black}{for} a theoretician
is to show that
one can still speak about
a quantum system admitting the conventional
probabilistic interpretation.

One of the oldest affirmative answers has been
provided by Scholtz et al \cite{Geyer}.
In the resulting formalism
one is allowed to admit that
in a consistent definition of a
closed (i.e., unitary) quantum system
the observable quantities can be
introduced
in
a fairly unconventional non-Hermitian operator representation.

In the resulting (often called ``quasi-Hermitian'')
equivalent
reformulation of the entirely standard
quantum mechanics
(cf., e.g., its
reviews \cite{Carl,ali,SIGMA})
even the most common requirement of unitarity
acquires a perceivably counterintuitive form.
In this formalism, indeed,
the observables (with real spectra)
become represented by
certain manifestly non-Hermitian operators
(say, $\Lambda_j \neq \Lambda_j^\dagger$, $j=0,1,\ldots$)
which are only required to be quasi-Hermitian \cite{Dieudonne},
 \be
 \Lambda_j^\dagger\,\Theta=\Theta\,\Lambda_j\,,
 \ \ \ \ j=0,1,\ldots,
 \label{quasih}
 \ee
i.e., Hermitian with respect
to an
{\it ad hoc\,} inner-product metric
$\Theta = \Theta^\dagger >0$
which remains $j-$independent \cite{Geyer}.
In other words, ``the normal quantum-mechanical interpretation''
\cite{Geyer} of the model
is obtained, provided only that we replace the conventional inner
product $\br \psi_a|\psi_b\kt$ in the conventional preselected
Hilbert space ${\cal H}$ by its upgrade
 \be
 \br \psi_a|\Theta|\psi_b\kt
 \label{forumro}
 \ee
\textcolor{black}{(to be abbreviated as
$\bbr \psi_a|\psi_b\kt$ in what folllows)}
where the operator $\Theta$ is called ``metric''.

The apparent paradox of coexistence
of the overall theoretical unitarity-of-the-evolution requirement
with a manifest non-self-adjointness of the observable can be easily
clarified when one realizes that the self-adjointness of an operator
can very easily be lost (and also re-acquired) after a change of the
inner product in the Hilbert space of states.
For the
Hamiltonians having real spectra and acting in a suitable Hilbert
space,
the property of their non-self-adjointness can simply
be treated as inessential, attributed to the mere ``ill-chosen''
(i.e., mathematically preferred but manifestly unphysical) inner
product.

The use of the nontrivial physical
Hilbert-space metric $\Theta$
enables us to treat the upgrade of inner product
 $\br \psi_a|\psi_b\kt\ \to \ \br \psi_a|\Theta|\psi_b\kt
 $
as a replacement of the preselected and user-friendly but manifestly
unphysical Hilbert space
 $
 {\cal H}={\cal H}_{mathematical}
 $
by its amended, formally non-equivalent and $\Theta-$dependent
alternative
 $
 {\cal H}={\cal H}_{physical}={\cal H}_{\Theta}
 $.
In such a formulation,
one of the decisive motivations of the introduction
of a nontrivial $\Theta$
has to be seen
in its adaptability to
the variable dynamical parameters characterizing, say,
the Hamiltonian.

This implies, first of all, that the spectral reality
of the observables
need not be robust.
The flexibility of metric
can be used to admit an access to
a quantum phase transition
{\it alias\,}  ``quantum catastrophe'' \cite{catast}.
At the critical EPN value of
parameters the quantum system in question changes its initial
character and loses its observability status.

\subsection{The case of hiddenly Hermitian
Hamiltonian with $N=2$\label{rwobser}}

Probably the best known elementary example of a non-Hermitian but
${\cal PT}-$symmetric quantum Hamiltonian
with real spectrum
is the two-level model
  \be
 H^{(2)}(A)=
 \left[ \begin {array}{cc} -iA&-1\\{}-1&iA\end {array} \right]\,,
 \ \ \ \ \ A \in \mathbb{R}
 \label{ham2a}
 \ee
for which the pair of eigenvalues
 $$
 E^{(2)}_\pm(A)=\pm \sqrt{1-A^2}
 $$
remains real if and only if $-1\leq A \leq 1$.

The two ends of the interval can be recognized as representing the
pair of the Kato's exceptional points EP=EP2. At these
points, in spite of the reality of the eigenvalues (or, more
precisely, of the roots of the related secular equation
$E^2+A^2-1=0$), the matrix itself ceases to be diagonalizable and,
hence, tractable as an observable quantum Hamiltonian.
Thus, whenever
the value of $A$ leaves the physical spectral-reality
open interval ${\cal D}^{}=(-1,1)$
the ${\cal PT}-$symmetry becomes spontaneously broken and
one can speak about a quantum phase transition.

Needless to add that
once we select, say, the positive value of $A^{(EP2)}_{+}= 1$,
we may recall formula (\ref{relaJ}) and
obtain
the canonical representation of the EP2-degenerate
model with Jordan block
 $$
 J^{(2)}=\left[ \begin {array}{cc} 0&1\\{}0&0\end {array} \right]
 $$
and with the transition matrix
 \be
 Q^{(2)}=\left[ \begin {array}{cc} -i&1\\{}-1&0\end {array}
 \right]\,.
 \label{tra2}
 \ee
What is less obvious is the validity of the
condition of quasi-Hermiticity (\ref{quasih})
in application to the
Hamiltonian \textcolor{black}{at $A < A_+^{(EP2)}$,}
 \be
 H^\dagger\,\Theta=\Theta\,H\,.
 \label{joher}
 \ee
As long as $N=2$ it is easy to show that such a
quasi-Hermiticity {\it alias\,} ``Hermiticity constraint in
disguise'' becomes
\textcolor{black}{satisfied if and only if we choose}
the inner product metrics
\textcolor{black}{in the one-parametric form}
 \be
 \Theta=\Theta^{(2)}({A},\xi)=
 \left[ \begin {array}{cc} 1&\xi-iA\\{}\xi+iA&1\end {array}
 \right]\,
 \label{metr2}
 \ee
with a free parameter $\xi$ and
with
the $\xi-$dependent
eigenvalues $\theta_\pm = 1\pm \sqrt{A^2+\xi^2}$.

\textcolor{black}{As} long as the metric must remain positive definite
we have to restrict the range of $\xi$
to an interaction-dependent interval of
 $$
 \xi \in (-\sqrt{1-A^2},\sqrt{1-A^2})\,.
 $$
We can conclude that
under this condition
the quantum system in question remains unitary
and observable even
in the two maximally non-Hermitian dynamical regimes  \textcolor{black}{lying}
in an arbitrarily small vicinity of one of the two
eligible EP2 extremes, i.e., \textcolor{black}{in the regimes where}
$ A \lessapprox +1$ or $ A \gtrapprox -1$,
with
the singularities at $A=A^{(EP2)}_\pm =\pm 1$ excluded.

\section{Non-uniqueness of
interpretation ($N=3$)\label{locika}}

One of the rarely mentioned consequences of
the $(N-1)-$parametric ambiguity of the inner product metric
as sampled by formula (\ref{metr2})
(for arbitrary $N$ see also \cite{SIGMAdva})
is that
$\Theta=\Theta_{physical}(H)\neq I$
can be kept nontrivial
even when the
corresponding Hamiltonian
is Hermitian,
$H_{unusual} = H_{unusual}^\dagger$.
For illustration of such an anomaly
the
toy-model Hamiltonian (\ref{ham2a}) with $N=2$
is too elementary
so that
we are forced to select
its next, $N=3$ descendant
 \be
 H^{(3)}=\triangle^{(3)}+V^{(3)}(A)=\left[ \begin {array}{ccc} -iA&-1&0
 \\{}-1&0&-1\\{}0&-1&iA\end {array} \right]\,.
 \label{han3}
 \ee
In the literature, incidentally, the latter model
is much less frequently discussed, in spite of having
the comparably elementary formulae for
its triplet of energy levels
including trivial $E_0=0$ and the doublet
$E_\pm=\pm \sqrt{2-A^2}$.

One can still
expect the existence of non-Hermitian degeneracies of course.

\begin{lemma}
\label{lemmaprvni}
Hamiltonian (\ref{han3}) possesses
two Kato's EP3 singularities at $A^{(EP3)}_\pm = \pm \sqrt{2}$.
\end{lemma}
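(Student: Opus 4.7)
The plan is to promote the eigenvalue coalescence already recorded in the preceding paragraph to a genuine third-order exceptional-point statement by supplementing it with a non-diagonalizability check. Recall from the excerpt that the spectrum of $H^{(3)}(A)$ consists of the level $E_0=0$ together with the doublet $E_\pm(A)=\pm\sqrt{2-A^2}$. Thus at the two candidate parameter values $A^{(EP3)}_\pm=\pm\sqrt{2}$ one immediately sees that all three eigenvalues coalesce at a single value $E^{(EP3)}=0$, providing the required algebraic multiplicity three. I would start the proof by verifying this via the characteristic polynomial
\begin{equation}
\det\bigl(H^{(3)}(A)-E\,I\bigr)=-E\,\bigl(E^{2}-(2-A^{2})\bigr),
\nonumber
\end{equation}
which factors cleanly because the diagonal of $V^{(3)}$ is purely imaginary and antisymmetric about the middle entry.

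The second, indispensable step is to show that the confluence at $A=\pm\sqrt{2}$ is non-diagonalizable, i.e.\ that the geometric multiplicity of the eigenvalue zero is exactly one rather than two or three. For this I would compute the rank of the matrix $H^{(3)}(\sqrt{2})$ directly: the row relation
\begin{equation}
R_{1}-i\sqrt{2}\,R_{2}-R_{3}=0
\nonumber
\end{equation}
exhibits one linear dependence among the rows, while the first two rows are obviously independent, so the rank is $2$ and the kernel has dimension one. Combined with algebraic multiplicity three, this forces the Jordan form to be the single $3\times 3$ block $J^{(3)}$ of formula (\ref{relaJ}), which is precisely the EP3 signature. The case $A=-\sqrt{2}$ is then handled by the observation that $H^{(3)}(-A)=\overline{H^{(3)}(A)}$ (complex conjugation being part of ${\cal T}$), whence the rank and Jordan structure are preserved.

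For completeness, and to match the style of section \ref{rwobser}, I would optionally exhibit a transition matrix $Q^{(3)}$ with $J^{(3)}=[Q^{(3)}]^{-1}H^{(3)}(\sqrt{2})Q^{(3)}$, constructed from a generalized eigenvector chain: start with the unique (up to scale) eigenvector $v_{1}\in\ker H^{(3)}(\sqrt{2})$, then solve $H^{(3)}(\sqrt{2})v_{2}=v_{1}$ and $H^{(3)}(\sqrt{2})v_{3}=v_{2}$. The solvability of these two triangular linear systems is guaranteed, retroactively, by the rank count above.

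I do not expect a real obstacle here: the only conceptual point worth emphasising is that the triple coalescence of eigenvalues is not by itself sufficient to call $A=\pm\sqrt{2}$ an EP3, and the verification that the kernel is one-dimensional (rather than two- or three-dimensional) is what genuinely earns the label "exceptional point of order three". The tiny technical cost is the explicit rank computation, which is essentially a one-line cofactor/row-reduction argument on a very sparse tridiagonal matrix.
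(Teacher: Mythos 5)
Your proof is correct, and it takes a genuinely different (though equally elementary) route from the paper's. The paper proves the lemma constructively: it exhibits the transition matrix $Q^{(3)}$ explicitly and verifies the similarity relation (\ref{relaJ}), i.e.\ $J^{(3)}=[Q^{(3)}]^{-1}H^{(3)}(\sqrt{2})\,Q^{(3)}$, in direct parallel with the $N=2$ computation (\ref{tra2}); the existence of $Q^{(3)}$ \emph{is} the proof. You instead deduce the Jordan structure abstractly, by pairing the algebraic multiplicity (the characteristic polynomial $-E\bigl(E^{2}-(2-A^{2})\bigr)$ collapsing to $-E^{3}$ at $A=\pm\sqrt{2}$) with the geometric multiplicity (rank $2$ via the row relation $R_{1}-i\sqrt{2}R_{2}-R_{3}=0$, which checks out), and you relegate the construction of $Q^{(3)}$ to an optional Jordan-chain afterthought. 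Both arguments are complete and both computations are verifiable; I confirmed that the paper's $Q^{(3)}$ is indeed the chain your procedure would produce (its columns are, in reverse order, $e_{1}$, $He_{1}$, $H^{2}e_{1}$). What your version buys is an explicit justification of the point you rightly emphasize --- that triple coalescence of eigenvalues alone does not certify an EP of \emph{order three}, so the one-dimensionality of the kernel must be checked --- something the paper leaves implicit in the invertibility of $Q^{(3)}$. What the paper's version buys is the transition matrix itself, which is the object of interest throughout (cf.\ the analogous $Q^{(4)}$, $Q^{(5)}$ constructions in sections \ref{sublocika} and \ref{rerelocika}). Your conjugation argument for $A=-\sqrt{2}$ is a clean replacement for the paper's terse ``etc.''
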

\begin{proof}.
At $N=3$ the constructive
proof of the existence of the Kato's EP3
form of the degeneracy
is still straightforward,
based
on Eq.~(\ref{relaJ})
and on the brute-force linear-algebraic
evaluation
(i.e., the constructive proof of existence)
of the respective
transition matrices.
Thus, in a way paralleling the $N=2$ result (\ref{tra2})
we obtain
 $$
 Q^{(3)}=\left[ \begin {array}{ccc} -1&-i\sqrt {2}&1
\\{}i\sqrt {2}&-1&0\\{}1&0&0\end {array} \right]\,
 $$
at the positive critical coupling $A^{(EP3)}_{+}= \sqrt{2}$, etc.
\end{proof}

\subsection{Hermitization of weakly non-Hermitian potentials}

During our search
for the physical (i.e., Hamiltonian-dependent)
inner product metric at $N=3$
we had to keep in mind
that
in a way sampled by the structure of
formula (\ref{metr2})
the admissible metrics $\Theta^{(3)}$
must form,
up to an inessential overall premultiplication
factor, a two-parametric family
\cite{SIGMAdva}.
Still,
in comparison with the preceding
$N=2$ construction,
we were surprised
by the emergence of a few qualitatively new obstructions.

Initially we
decided to
follow the methodical guidance as provided
by the $N=2$ model.
We
assigned the role of parameters
to the real parts of matrix elements
$\Theta^{(3)}_{12}$ and $\Theta^{(3)}_{13}$ and
we obtained the following result.

\begin{lemma}
\label{druhelemma}
At $N=3$, relation (\ref{joher})
is satisfied by the two-parametric family of
the 3 by 3 matrix metrics
 \be
 \Theta^{(3)}(A,\xi,\eta)=
  \left[ \begin {array}{ccc}
   1&{\eta}-iA&{\xi}-iA{\eta}
  \\\noalign{\medskip}{\eta}+iA&{\xi}+1+{ A}^{2}&{\eta}-iA
 \\\noalign{\medskip}{\xi}+iA{\eta}&{\eta}+iA&1
 \end {array} \right]\,.
 \label{ondula}
 \ee
\end{lemma}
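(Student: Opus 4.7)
The plan is to establish the lemma in two stages: first, direct verification that the ansatz (\ref{ondula}) satisfies the quasi-Hermiticity equation (\ref{joher}) for arbitrary real $\xi,\eta$; and second, a dimension count showing that this family exhausts all Hermitian solutions of (\ref{joher}) once the overall normalization is fixed by $\Theta_{11}=1$.

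For the verification stage I would form the two $3\times 3$ products $[H^{(3)}]^{\dagger}\,\Theta^{(3)}(A,\xi,\eta)$ and $\Theta^{(3)}(A,\xi,\eta)\,H^{(3)}$ and compare them entry by entry. Because $H^{(3)}$ is tridiagonal and $[H^{(3)}]^{\dagger}$ differs from $H^{(3)}$ only in the sign of the two imaginary diagonal entries, the nine scalar conditions coming from (\ref{joher}) decouple into a few natural groups: the corner equations are compatible with $\Theta_{11}=\Theta_{33}=1$; the nearest-off-diagonal positions force $\Theta_{12}=\eta-iA$ and $\Theta_{23}=\eta-iA$, contributing one real free parameter; the central $(2,2)$ entry pins $\Theta_{22}$ to $\xi+1+A^{2}$; and the corner entry $\Theta_{13}$ is then forced to depend on both free parameters through $\xi-iA\eta$. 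A careful but routine matrix multiplication confirms that every scalar equation reduces to an identity in $A,\xi,\eta$.

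For the completeness stage I would invoke the following standard fact. Lemma \ref{lemmaprvni} shows that the spectrum $\{0,\pm\sqrt{2-A^{2}}\}$ of $H^{(3)}$ is simple throughout the open ${\cal PT}-$symmetric regime $|A|<\sqrt{2}$. Diagonalizing $H^{(3)}=V\Lambda V^{-1}$ with real diagonal $\Lambda$, the equation (\ref{joher}) is equivalent to the commutativity of $V^{\dagger}\Theta V$ with $\Lambda$, so this transformed metric must itself be diagonal and hence carries $3$ complex parameters; imposing $\Theta=\Theta^{\dagger}$ forces those diagonal entries to be real, yielding a real $3$-dimensional space of Hermitian intertwiners, and fixing $\Theta_{11}=1$ removes the overall scale, leaving precisely the two real parameters $\xi,\eta$ displayed in (\ref{ondula}).

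The main obstacle is not conceptual but lies in identifying the correct dependence of the corner entry $\Theta_{13}$ on the free parameters. At $N=2$ the single off-diagonal entry $\Theta_{12}$ contributed only its real part as the free parameter, with its imaginary part slaved to $A$ through (\ref{joher}). At $N=3$ the ``next corner'' entry $\Theta_{13}$ is no longer so simply structured: its real part is free (namely $\xi$), but its imaginary part is forced to the composite value $-A\eta$ involving both $A$ and the other free parameter $\eta$. Recognizing this nonlinear coupling between the two free parameters is the ``qualitatively new obstruction'' alluded to in the paragraph preceding the lemma, and without it the ansatz would not close consistently under (\ref{joher}).
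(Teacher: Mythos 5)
Your proof is correct. Its first stage coincides with what the paper actually does: the paper's one-line proof treats Eq.~(\ref{joher}) as nine scalar algebraic equations for the matrix elements of $\Theta^{(3)}$ and solves them by brute force, which is exactly your entry-by-entry comparison of $[H^{(3)}]^{\dagger}\Theta$ with $\Theta H^{(3)}$ (and indeed all nine entries reduce to identities in $A,\xi,\eta$; e.g.\ both products have $(1,1)$-entry $-\eta$ and $(2,2)$-entry $-2\eta$). Where you genuinely go beyond the paper is the completeness argument: the paper's ``determining the unknown matrix elements'' implicitly asserts that (\ref{ondula}) exhausts all normalized Hermitian solutions, while you make this explicit by diagonalizing $H^{(3)}=V\Lambda V^{-1}$ and observing that $V^{\dagger}\Theta V$ must commute with the simple real diagonal $\Lambda$, hence be real diagonal, giving a $3$-real-parameter cone and, after fixing $\Theta_{11}=1$, the two parameters $\xi,\eta$. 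This is a cleaner and more transferable way to count the $(N-1)$-parameter ambiguity cited from \cite{SIGMAdva}, at the price of being valid only in the regime $|A|<\sqrt{2}$ where Lemma~\ref{lemmaprvni} guarantees a simple real spectrum (at $A=\pm\sqrt{2}$ the Hamiltonian is non-diagonalizable and the intertwiner space degenerates), whereas the direct algebraic solution of the nine equations establishes (\ref{ondula}) as a polynomial identity for all real $A$. One cosmetic caveat: in the verification direction nothing is ``forced''; the forcing of $\mathrm{Im}\,\Theta_{12}=-A$, $\Theta_{22}-\mathrm{Re}\,\Theta_{13}=1+A^{2}$ and $\mathrm{Im}\,\Theta_{13}=-A\,\mathrm{Re}\,\Theta_{12}$ only emerges if you run the derivation with all nine entries unknown, which is the paper's route.
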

\begin{proof}
is based on the brute-force solution
of Eq.~(\ref{joher}) which is
interpreted as a set of nine algebraic equations
determining the unknown matrix elements of the metric.
\end{proof}

Next, we applied
the requirement of a minimization of anisotropy
of the metric
as recommended in \cite{Lotor}.
This led us
to the choice of $\xi=\eta=0$ in (\ref{ondula}).
We obtained
the following triplet of eigenvalues,
 \be
 \theta_0=1\ \ \ {\rm and}
\ \ \ \theta_\pm=1 + \frac{1}{2}
\left [ A^2 \pm \sqrt{8A^2+A^4}
\label{xylo}
\right ]\,.
 \ee
The (necessary)
positive definiteness of the metric
is only guaranteed
for $0 \leq A < 1$.
Thus,
the applicability
of the result of Lemma \ref{druhelemma}
with $\xi=\eta=0$ is found
restricted
to the models $H^{(3)}(A)$
in which the non-Hermiticity is
far from maximal, i.e., far from the dynamical
regime of our present interest.

\subsection{Hermitization of strongly non-Hermitian potentials}

With our attention turned to the latter dynamical regime we may
assume that
 $$
 1 \leq A < \sqrt{2}\,.
 $$
The use of formula~(\ref{ondula}) with trivial
$\xi=0$ and $\eta=0$
becomes inapplicable
in such a vicinity of the
EP3 singularity.
We have to employ a different method of construction
of the metric, especially
in the truly
strongly non-Hermitian dynamical regime where
$ A \lessapprox \sqrt{2}$.

We will use the method which
has been proposed and tested
in
\cite{PLA}.
Its basic idea lies in the factorization ansatz
 \be
 \Theta=\Omega^\dagger\,\Omega
 \label{consy}
 \ee
where the
$N$ by $N$ matrix factor $\Omega$
(which is, naturally, non-unitary when $\Theta \neq I$)
is defined as composed of an $N-$plet of certain ``ketket''
column vectors
 \be
 {\Omega}^\dagger_{}=
 \left[ \begin {array}{cccc}
 |\psi_1\kkt\,,&|\psi_2\kkt\,,&\ldots&|\psi_N\kkt\,
 \end {array} \right]\,.
 \label{cuieNu}
 \ee
Interested readers may find the detailed explanation of the method in
\cite{PLA}. Here it is sufficient to point out only that
the column vectors $|\psi_j\kkt$
have to be eigenvectors of the Hermitian-conjugate
version of our non-Hermitian Hamiltonian,
 \be
 H^\dagger\,|\psi_n\kkt = E_n\,|\psi_n\kkt\,,
 \ \ \ \ n=1,2,\ldots\,,N\,.
 \label{qdirSE}
 \ee
One of the key merits of this approach is the
positivity of metric which
trivially follows from formula (\ref{consy}).
A disadvantage of this approach
lies in an excessive
length of formulae
describing our model
at large $A \approx \sqrt{2}$
(i.e., in the vicinity of EP3).
Even at $N=3$, therefore, it is difficult to display
the $A-$dependence of
the metric in print.

A help has been found in a
reparametrization
of dynamics with
$A=A(t)=\sqrt{2-2t^2}$.
Even though the display of metric
then still
requires the capacity exceeding a single printed page,
the computer-assisted
symbolic manipulations proved able to
offer the compact printable
formulae representing the eigenvalues $\theta_n(t)$
of the metric as functions of $t$
\textcolor{black}{up to
the unphysical EP2 limit of $t=0$. In this manner we can also speak about
an increase of non-Hermiticity with the decrease of
the absolute value $|t|$ of our} time-mimicking parameter.

\textcolor{black}{Let} us
remind the readers that
\textcolor{black}{when} we
fix the
normalization
of our column vector
solutions of Eq.~(\ref{qdirSE}), say, via their
last components,
 \be
 \br N|\psi_n\kkt = 1
 \label{houfnice}
 \ee
such
an {\it ad hoc\,} postulate
makes the metric unique.
As a typical sample of
the consequence of such a decision
(i.e., of such a form of removal of the ambiguity of the metric)
we obtained the following, easily proved
result.

\begin{lemma}
\label{ley}
After we choose $N=3$ and accept the normalization
convention (\ref{houfnice})
in Eq.~(\ref{cuieNu})
at all $n=1,2,\ldots,N$,
the triplet of eigenvalues of
the resulting metric (\ref{consy})
acquires the following explicit exact form,
 $$
  \theta_1=-3\,{t}^{2}+6-\sqrt
 {{t}^{4}-36\,{t}^{2}+36} \approx
 {\frac {2}{3}}{t}^{4}+{\frac {1}{3}}{t}^{6}+{\frac {11}{54}}{t}^{8}+\ldots \,,
\ \ \ \ \
  \theta_2=4\,{t}^{2}\,,
 $$
 $$
 \theta_3=-3\,{t}^{2}+6+\sqrt {{t}^{4}-36\,{t}^{2}+36}
 \approx
 12-6\,{t}^{2}-{\frac {2}{3}}{t}^{4}-{\frac {1}{3}}{t}^{6}-{\frac {11}
{54}}{t}^{8}+ \ldots
 \,.
 $$
\end{lemma}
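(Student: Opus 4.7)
\noindent\emph{Proof plan.} The plan is to construct the metric $\Theta=\Omega^\dagger\Omega$ of (\ref{consy})--(\ref{cuieNu}) explicitly and then extract its spectrum from the characteristic polynomial of the resulting $3\times 3$ Hermitian matrix.

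First I would write down $H^\dagger$ obtained from (\ref{han3}) by flipping the sign of $A$ on the diagonal, and solve $H^\dagger|\psi_n\kkt=E_n|\psi_n\kkt$ for the three real eigenvalues $E_0=0$ and $E_\pm=\pm\sqrt{2-A^2}$. Imposing the normalization (\ref{houfnice}) fixes the third component of each $|\psi_n\kkt$ to unity, after which rows~$1$ and~$3$ of the eigenvalue system determine the first two components as explicit rational expressions in $A$ and $E_n$; row~$2$ then provides a consistency check that reduces to the identity $A^2+E_n^{\,2}=2$.

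Second I would assemble $\Omega^\dagger=[|\psi_1\kkt,|\psi_2\kkt,|\psi_3\kkt]$ and form
\[
\Theta=\sum_{n=1}^{3}|\psi_n\kkt\bbr\psi_n|.
\]
The reparametrization $A=\sqrt{2-2t^2}$ turns $E_\pm$ into $\pm t\sqrt{2}$ and, since $|iA\pm E_\pm|^2=A^2+E_\pm^{\,2}=2$, rationalizes the denominators $iA\mp E_\pm$ appearing in the eigenvectors; every entry of $\Theta$ then reduces to a low-degree polynomial in $t$, tractable either by hand or by a short symbolic-manipulation call.

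Third I would extract the eigenvalues from $\det(\Theta-\lambda I)$. Rather than attacking the cubic head-on, I would guess the root $\theta_2=4t^2$---motivated by the spectral symmetry $E_+\leftrightarrow -E_-$ of the two non-zero eigenvectors---and verify $\det(\Theta-4t^2 I)=0$. The cubic then factors as $(\lambda-4t^2)(\lambda^2-a\lambda+b)$, with $a$ and $b$ pinned down by the scalar invariants
\[
\mathrm{tr}\,\Theta=12-2t^2,\qquad \det\Theta=32\,t^6,
\]
yielding $a=12-6t^2$ and $b=8t^4$; the remaining quadratic then gives $\lambda=-3t^2+6\pm\sqrt{t^4-36t^2+36}$, reproducing $\theta_1$ and $\theta_3$. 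The displayed Taylor expansions follow from writing $\sqrt{t^4-36t^2+36}=6\sqrt{1-t^2+t^4/36}$ and applying the binomial series. The main obstacle will be the bookkeeping around the complex-valued eigenvector entries and, more substantively, the verification that $4t^2$ really is a root and that the determinant equals $32\,t^6$; both checks can be discharged either by direct cofactor expansion of a $3\times 3$ matrix whose entries are already cleaned up by the substitution $A^2=2-2t^2$, or more economically by noting that $\det\Theta=|\det\Omega|^2$ and computing the (simpler) determinant of $\Omega^\dagger$.
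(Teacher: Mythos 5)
Your proposal is correct and follows essentially the same route as the paper, which offers no explicit proof beyond invoking the construction (\ref{consy})--(\ref{qdirSE}) with normalization (\ref{houfnice}) and calling the result ``easily proved'': one solves $H^\dagger|\psi_n\kkt=E_n|\psi_n\kkt$, fixes the last component to unity, forms $\Theta=\sum_n|\psi_n\kkt\bbr\psi_n|$, and diagonalizes. Your quantitative anchors all check out ($\mathrm{tr}\,\Theta=12-2t^2$, $\det\Theta=|\det\Omega|^2=32\,t^6$, and $(1,0,1)^T$ is an eigenvector of $\Theta$ with eigenvalue $4t^2$), so the factorization of the characteristic cubic and the stated expansions follow as you describe.
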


It is worth adding that
at small $t\neq 0$,
\textcolor{black}{i.e., in the domain of
a maximal non-Hermiticity of the Hamiltonian,}
the latter eigenvalues
of our sample metric
are positive and
non-intersecting,
 $$
 0<
 \theta_1(t) \ll
 \theta_2(t) \ll
 \theta_3(t)  \,,\ \ \ \ \ 0<t^2 \ll 1\,.
 $$
This means that the metric
of Lemma \ref{ley}
(made unique by constraint (\ref{houfnice})
and denoted by a subscripted $S$
standing for ``sample'',  $\Theta^{(3)}_S(t)$)
satisfies all of the formal requirements of the theory.
In other words, the decrease of $|t|$
marks a \textcolor{black}{non-empty
corridor to extreme non-Hermiticity}
{\it alias\,} a
trajectory
of a smooth access to the EP3 singularity
along which the quantum system remains unitary.
\textcolor{black}{This also makes our construction
and interpretation of
the $t \gtrapprox 0$ family of our
maximally non-Hermitian
potentials $V^{(3)}(\sqrt{2-2t^2})$ completed.}

In the EP3 limit
the metric acquires a compact \textcolor{black}{but degenerate, singular}
form
 $$
 \lim_{t \to 0^+}\,\Theta^{(3)}_S(t)= \left[
 \begin {array}{ccc}
 3&-3\,i\sqrt {2}&-3
 \\\noalign{\medskip}3\,i\sqrt {2}&6&-3\,i\sqrt {2}
 \\\noalign{\medskip}-3&3\,i\sqrt {2}&3
\end {array} \right]\,
 $$
with eigenvalues 0, 0 and 12. Matrix $\Theta^{(3)}_S(0)$
ceases to be invertible and acceptable as a metric.
The quantum system in question \textcolor{black}{would cease} to be observable
\textcolor{black}{at $t=0$. Thus,}
one can speak about a
${\cal PT}-$symmetric quantum phase transition
\textcolor{black}{in such a limit}
\cite{CGbook,passage}.

\subsection{Spurious Hermiticity}

By construction, the
validity of
quasi-Hermiticity of our $N=3$ Hamiltonian
with respect to the above-selected sample metric
$\Theta^{(3)}_S(t)$
is not restricted to the small
values of \textcolor{black}{ $t^2\neq 0$}.
The eigenvalues
of Lemma \ref{ley}
remain positive,
ordered and distinct
(i.e., non-intersecting)
until
the onset of the
conventional Hermiticity of our
Hamiltonian matrix at
\textcolor{black}{the ``large''}
$t^2=1$.
Nevertheless,
our ``sample'' metric remains positive definite
even out of
the interval of ``times'' $t \in (-1,1)$.
As a metric, it
remains fully acceptable
in \textcolor{black}{an enhanced} interval of
 \be
 t \in (-t_{\max},t_{\max})\,,\ \ \ \ \
 t_{\max} \approx 1.014611872\,.
 \label{phyran}
 \ee
An explanation of apparent paradox is provided by
Figure \ref{globe}
in which we see the accidental coincidence
$\theta_2(1) =
 \theta_3(1)=4 $ of two eigenvalues
followed by their reordering,
 $$
 0<
 \theta_1(t) <
 \theta_3(t) <
 \theta_2(t)\,\ \ \ \ \ 1<|t|<t_{\max}\,.
 $$
This is an unexpected result.
The physical range (\ref{phyran}) of $t$
acquires a non-vanishing overlap
with the two
intervals of $t \in (-\infty,-1)$
and $t \in (1,\infty)$
in which our $\Theta_S^{(3)}-$quasi-Hermitian
Hamiltonian matrix is also Hermitian
in the sense of the linear algebra of textbooks,
i.e., in our present terminology,
also $\Theta_C^{(3)}-$quasi-Hermitian
with respect to the conventional identity-matrix
$\Theta_C^{(3)}=I^{(3)}$.

\begin{figure}[h]                    
\begin{center}                         
\epsfig{file=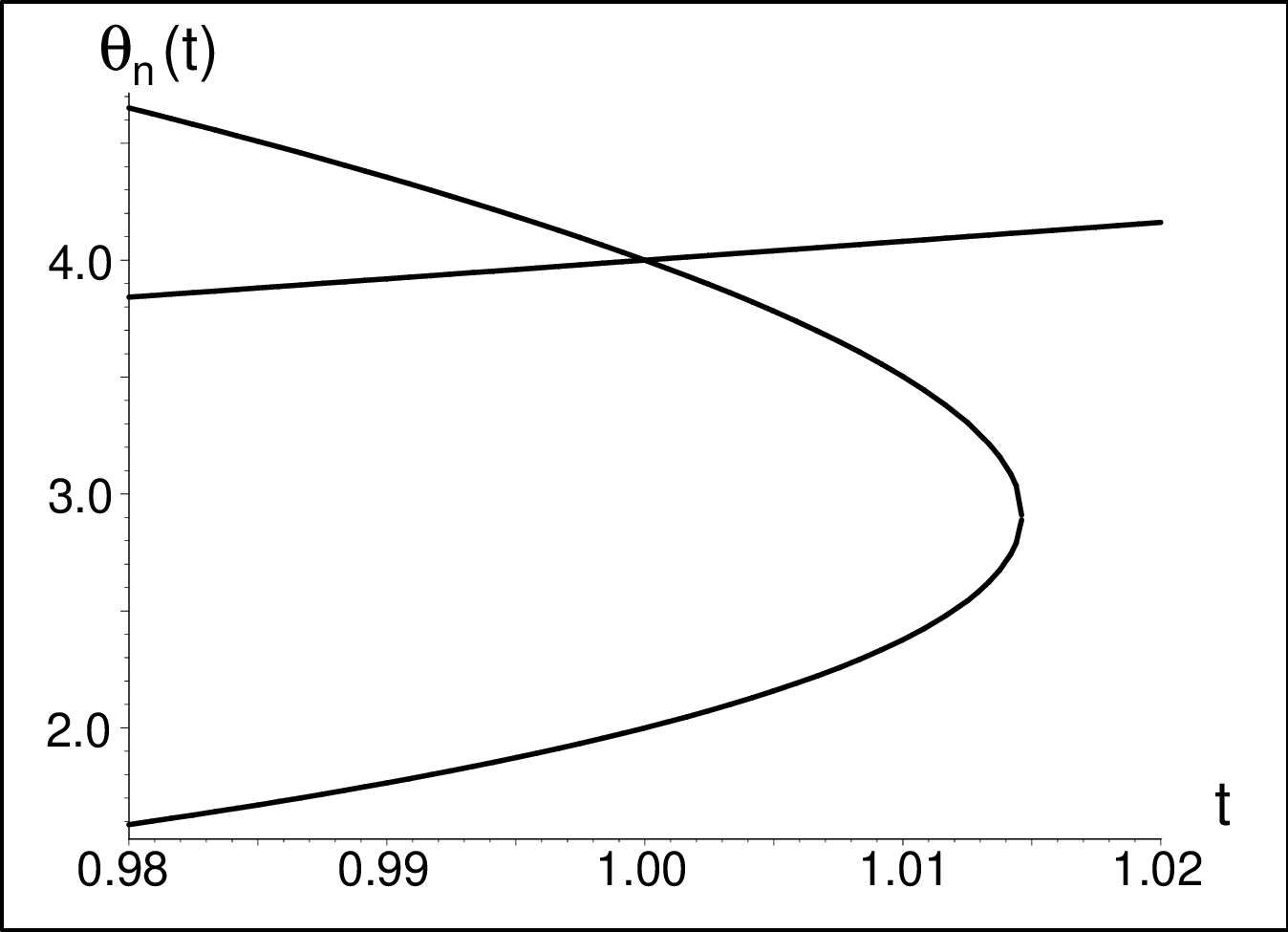,angle=270,width=0.35\textwidth}
\end{center}    
\caption{The triplet of eigenvalues of
our special metric $\Theta^{(3)}_S(t)$ in the vicinity of
the instant $t = 1$, beyond which the Hamiltonian matrix (\ref{han3}) itself becomes
Hermitian.
 \label{globe}}
\end{figure}

The origin of such an apparent duplicity
of alternative physical options and interpretations
is to be attributed to a
non-minimal anisotropy of the
physical Hilbert space geometry determined
by the inner-product metric  $\Theta^{(3)}_S(t)$.
This means that
at the instant $t=1$ of onset of the conventional
Hermiticity
of $H^{(3)}$
with respect to the trivial an trivially isotropic
metric $\Theta_C^{(3)}$
our reconstructed metric  $\Theta^{(3)}_S(t)$
reads
  $$
 \lim_{t \to 1}\,\Theta^{(3)}_S(t)=
 \left[ \begin {array}{ccc} 3&0&1\\\noalign{\medskip}0&4&0
 \\\noalign{\medskip}1&0&3\end {array} \right]\,.
 $$
With eigenvalues 4, 4 and 2, it remains anisotropic
(for a graphical reconfirmation
of the random coincidence of the two of its eigenvalues
see, once more, Figure \ref{globe}).

\section{Corridors to extreme non-Hermiticity at $N>3$
\label{relocika}}

In quantum mechanics of conventional textbooks \cite{Messiah}
the physical metric is trivial, $\Theta_C=I$.
This would make the
description of a quantum phase transition
impossible
because
in such a setting, all of the eigenvalues
of all of the observables  (say, $\Lambda$)
have the well known tendency of avoiding the degeneracy.
Thus, when we let such
an operator vary with a parameter, $\Lambda=\Lambda(g)$, its
observability status remains robust.
This implies that the
degeneracy
(i.e., the confluence of at least some of
the eigenvalues and eigenvectors)
can only occur at the
exceptional-point value of the parameter \cite{Kato}.

One of the main theoretical assumptions accepted
in our present paper
is that we are only considering
the
observable $N$ by $N$ matrix Hamiltonians of a
manifestly non-Hermitian and, at the same time,
very specific and restrictive
form of Eq.~(\ref{hamacek})
containing a {\em strictly local\,} potential (\ref{aVto}).
In such a case,
the realization of a {\em complete},
phase-transition-mimicking
EPN-related confluence of {\em all\,} of
the eigenvalues and eigenvectors
need not exist at all.
At every dimension $N$ of the Hilbert space,
this existence has to be demonstrated constructively.

\subsection{$N=4$\label{sublocika}}

New technical challenges emerge at $N=4$.
With
 \be
 H^{(4)}=
 \triangle^{(4)}+V^{(4)}(A,B)
 =\left[ \begin {array}{cccc} -iA&-1&0&0
 \\{}-1&-iB&-1&0\\{}0&-1&iB&-1\\{}0&0&-1&iA
 \end {array} \right]
 \label{hacty}
 \ee
the formula for energies is longer but still elementary,
 $$
 E_{\pm,\pm}=\pm 1/2\,\sqrt {6\pm 2\,\sqrt {5-2\,{A}^{2}-6\,{B}^{2}
 +{A}^{4}-2\,{B}^{2}{A}^{2}+{B}^{4}-8\,BA}-2\,{A}^{2}-2\,{B}^{2}}\,.
 $$
The search for the two EP4-determining parameters
becomes perceivably less straightforward,
requiring a slightly deeper inspection of secular equation
 \be
 {{\it E}}^{4}+ b\, {{\it E}}^{2}+c=0\,,
 \ \ \ b=b(A,B)= -3+{A}^{2}+{B}^{2} \,,
 \ \ \ c=c(A,B)
 =(1+A\,B)^2-A^2\,.
 \label{eku}
 \ee
The EP4 singularity at $E=0$ may only occur when
the secular equation degenerates to its
trivial form $E^4=0$, i.e., when $b=c=0$,
i.e., when we satisfy the following
pair of constraints
\be
-3+{A}^{2}+{B}^{2}=0\,,\ \ \ \
1+2\,
BA-{A}^{2}+{B}^{2}{A}^{2}=0\,.
\label{rucha}
\ee
In a brute-force approach we may eliminate
$B= \sqrt{3-A^2}$
and
get
the single reduced secular equation in non-polynomial form,
 $$
 {{\it E}}^{4}+1+2\,\sqrt {3-{A}^{2}}A+2\,{A}^{2}-{A}^{4}=0\,.
 $$
It is easily tractable numerically.
Out of the two
available
real and positive roots
let us first
consider the larger one,
 $$
 A^{(EP4)}= 1.683771565\,.
 $$
It appears to correspond to the smaller value of
 $$
  B^{(EP4)}= 0.4060952085\,,
 $$
so that the shape of our discrete potential
(and, first of all, the asymptotic growth
of its absolute value)
mimics the shape of its continuous ICO
analogue.

For our
above-declared purposes
of the maximalization of non-Hermiticity,
such a choice of the roots may be declared
preferred and, in this sense, unique.
The insertion of the parameters in the Hamiltonian
yields the canonical formula (\ref{relaJ})
in which the
transition matrix has been found to read
 $$
 Q^{(4)}= \left[ \begin {array}{cccc}   i&-
 1.835086683& - 1.683771565\,i& 1
 \\{} 1.683771562&
 2.089866772\,i&- 1& 0\\{}
 - 1.683771565\,i& 1& 0& 0
 \\{}- 1& 0& 0& 0\end {array} \right]\,.
  $$
In spite of the implicit presence
of numerical round-off errors
the insertion
of the latter formula
in relation~(\ref{relaJ}) confirms
its validity and demonstrates
that even the
calculations performed in the
finite-precision floating-point arithmetics
suffice for the EPN-localization purposes.

\subsection{Domain of parameters of
unbroken ${\cal PT}-$symmetry}

A very explicit outline
of a systematic, albeit rather tedious construction of the
``admissible'',
unitarity-preserving processes of the collapse or
unfolding of the
EPN degeneracy
may be found described, for any $N$,
in
our recent dedicated paper \cite{corridors}.
Such an approach to the proof
is universal
and enables one to construct a perturbation of $H^{(EPN)}$
due to which the
EPN singularity becomes connected
with some standard and fully regular family of Hamiltonians
by a
``unitary-access''
corridor.

At $N=4$, the study of secular equation (\ref{eku})
can help us
to ask whether there are perturbations
belonging to the rather narrow family of
Hamiltonians of the form of Eq.~(\ref{hacty}).
Thus, inside
a two-dimensional domain ${\cal D}$ of
parameters $A$ and  $B$
we may try to
search for a corridor
of unitary access to EP4
using potentials
which still remain local:
\textcolor{black}{It is precisely the locality requirement which makes the
problem nontrivial.}

%
%
\begin{figure}[h]     
\begin{center}                         
\epsfig{file=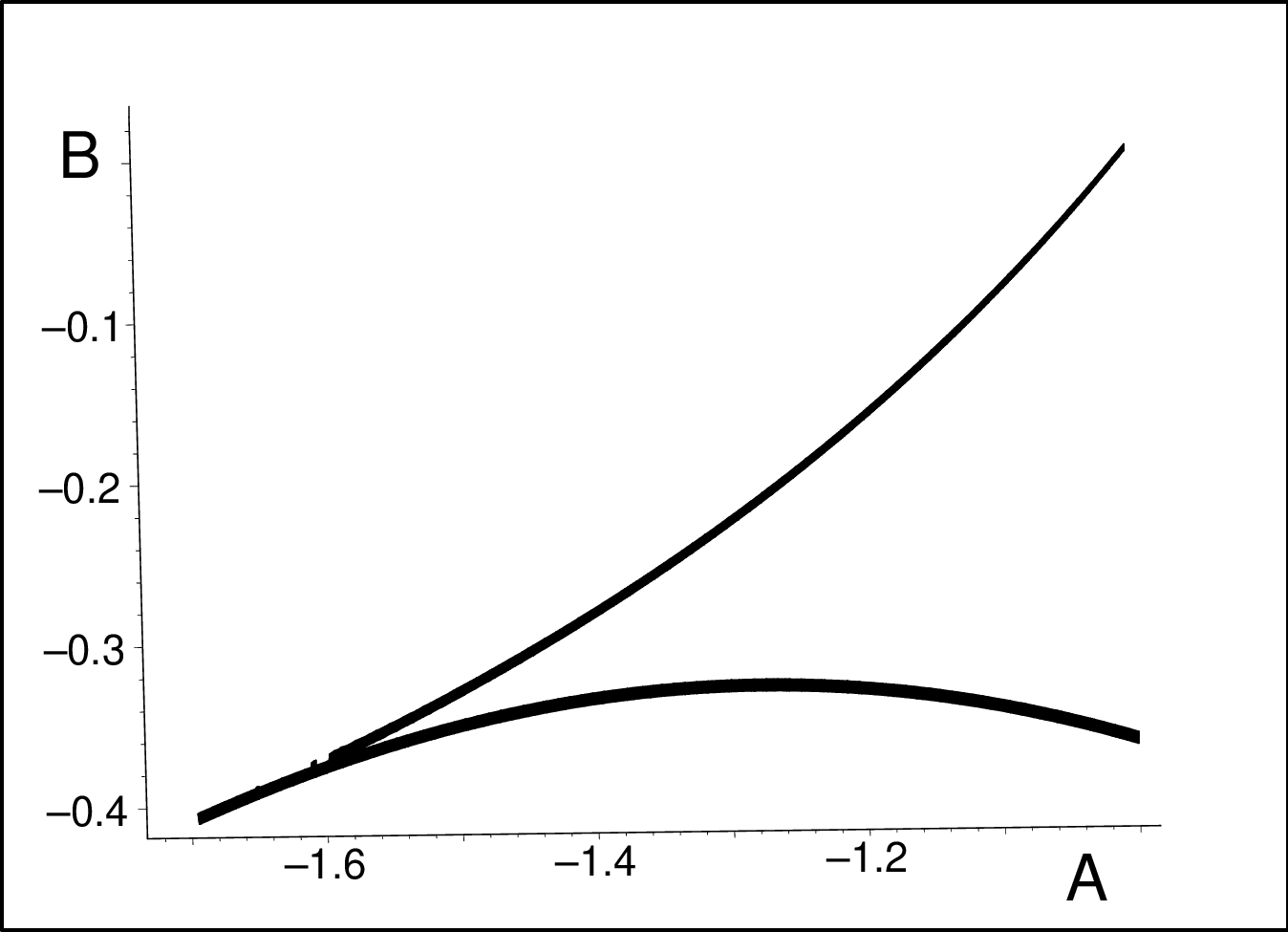,angle=270,width=0.35\textwidth}
\end{center}    
\caption{Spiked
form of the boundary of the physical
star-shaped domain ${\cal D}$ of
unbroken ${\cal PT}$ symmetry in the vicinity of
one of its four maximal-non-Hermiticity
extremes ($N=4$, numerical as well as analytic construction).\label{lobe}}
\end{figure}

\begin{lemma}
\label{lemma1}
For Hamiltonian (\ref{hacty})
the pairs of
values $A^{(EP4)}$ and $B^{(EP4)}$
exist and
lie on the boundary of a non-empty
physical domain ${\cal D}$
defined by the requirement that the related
spectrum of energies is
real, discrete and non-degenerate.
\end{lemma}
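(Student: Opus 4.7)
The plan is to rewrite the defining conditions of ${\cal D}$ as explicit inequalities in the biquadratic invariants $b$ and $c$ of (\ref{eku}), verify non-emptiness by inspection at the origin, and establish boundary-inclusion of the EP4 by a local-diffeomorphism argument.

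First, setting $u := E^2$ reduces the secular equation (\ref{eku}) to the quadratic $u^2 + b\,u + c = 0$ with roots $u_\pm = (-b \pm \sqrt{b^2 - 4c})/2$. A routine case analysis shows that the four energies $\pm\sqrt{u_\pm}$ are real, non-zero and pairwise distinct if and only if
$$
b(A,B) < 0, \qquad c(A,B) > 0, \qquad b(A,B)^2 - 4\,c(A,B) > 0,
$$
so that ${\cal D}$ is the intersection of the three open semi-algebraic sets cut out by these strict inequalities. Non-emptiness is immediate: at the Hermitian origin $A=B=0$ one gets $b=-3$, $c=1$, $b^2-4c=5$, all of the required sign, so ${\cal D}$ is an open neighbourhood of the origin. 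The existence of the pair $(A^{(EP4)},B^{(EP4)})$ has essentially already been recorded above, where the system $b=c=0$ was reduced to an elementary one-variable polynomial with explicit real roots.

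The only non-routine ingredient is that the EP4 actually lies on $\partial{\cal D}$. At this point $b=c=0$ and hence $b^2-4c=0$, so all three defining inequalities collapse simultaneously. Consequently a generic linear perturbation in $(A,B)$ which merely arranges $b<0$ and $c>0$ to first order produces $b^2-4c\approx -4c<0$ to leading order, i.e.\ a complex conjugate pair of eigenvalues sneaks in. The EP4 is therefore a genuinely corner-like boundary point, and any admissible approach from inside ${\cal D}$ must proceed along a parabolic rather than a straight-line arc. This is the main obstacle and the reason the proof is not one-line.

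To make the parabolic approach precise I compute the Jacobian of $(A,B)\mapsto(b,c)$ at the EP4. Using the relation $1+A^{(EP4)}B^{(EP4)}=A^{(EP4)}$ (the relevant branch of $c=0$), a short calculation gives
$$
\det\frac{\partial(b,c)}{\partial(A,B)} = 4\,A^{(EP4)}\!\left(\bigl(A^{(EP4)}\bigr)^2 - \bigl(B^{(EP4)}\bigr)^2 + B^{(EP4)}\right),
$$
which is straightforwardly verified to be non-zero at the numerical EP4 coordinates listed earlier. By the inverse function theorem, $(A,B)\mapsto(b,c)$ is therefore a local diffeomorphism near the EP4, and the preimage of the parabolic arc $\{(b,c)=(-3s,\,s^2):0<s\ll 1\}$ is a smooth curve in $(A,B)$-space terminating at $(A^{(EP4)},B^{(EP4)})$ and satisfying $b=-3s<0$, $c=s^2>0$, $b^2-4c=5s^2>0$ along its entire length. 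This curve thus lies entirely in ${\cal D}$ and witnesses the claimed boundary inclusion, completing the argument.
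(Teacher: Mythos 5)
Your proposal is correct, and the first half coincides with the paper's: both reduce the secular equation (\ref{eku}) to a quadratic in $E^2$ and characterize ${\cal D}$ by the same triplet of strict inequalities $b<0$, $c>0$, $b^2-4c>0$. Where you genuinely diverge is in the boundary-inclusion step. The paper introduces an {\it ad hoc\,} reparametrization $A^2+B^2=3\cos^2\alpha$, $1+AB=A\cosh\beta$, $3\sin^2\alpha=2A\sinh\beta\cos\gamma$, collapses it to the single constraint (\ref{constr}), and argues qualitatively that the ${\cal O}(\beta^2)$ and ${\cal O}(\beta\gamma^2)$ corrections push the root of $P(A,0,0)$ slightly to the left while keeping $A$ and $B$ real --- an implicit-function-theorem argument in adapted coordinates. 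You instead work directly in the natural coordinates $(b,c)$, verify that the Jacobian $4A\bigl(A^2-B^2+B\bigr)$ of $(A,B)\mapsto(b,c)$ is nonzero at the EP4 (which checks out numerically on the branch $1+AB=A$), and pull back the explicit parabolic arc $(b,c)=(-3s,s^2)$, along which all three inequalities hold identically ($b^2-4c=5s^2>0$). Your version is more self-contained and makes the geometry of the spike transparent: your observation that a straight-line approach forces $b^2-4c\approx-4c<0$ is exactly the analytic content behind the ``narrowing corridor'' of the paper's Figure~\ref{lobe}, stated there only as a numerical finding. You also supply an explicit non-emptiness check at $A=B=0$ ($b=-3$, $c=1$, $b^2-4c=5$), which the paper leaves implicit. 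The paper's parametrization, for its part, is the one reused at higher $N$ and in \cite{corridors}, so it buys uniformity across dimensions at the cost of transparency here.
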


\begin{proof}
. In the vicinity of any EP4  the
acceptable
deviations of
parameters $A$ and $B$
from their critical EP4 values
must be
compatible with the requirement
that the root
$E^2$ of secular Eq.~(\ref{eku})
has to remain real, positive and non-degenerate.
This leads to the
triplet of conditions
 \be
 b(A,B)<0\,,
 \ \ \ c(A,B)>0\,,\ \ \ b^2(A,B)>4\,c(A,B)\,.
 \ee
Numerically, these inequalities may be shown to
determine
the ``narrowing corridor'' boundary of
the physical parametric domain
${\cal D}$ of the spiked
shape as sampled in Figure \ref{lobe}.

A non-numerical version of the proof of existence of
such a corridor of access to the
maximally non-Hermitian quantum-dynamical regime
is also feasible because
near the EP4 extreme
the interior of ${\cal D}$
can be reparametrized
in terms of three new real,
positive and not too large
auxiliary parameters $\alpha$, $\beta$ and $\gamma$,
 \be
 A^2+B^2=3\,\cos^2 \alpha\,,\ \ \
 1+A\,B=A\,\cosh \beta\,,\ \ \
 3\,\sin^2 \alpha = 2\,A\,\sinh \beta\,\cos \gamma\,.
 \ee
Once we eliminate $B=\cosh \beta -1/A$ and $3\,\sin^2 \alpha$
we arrive at a single constraint
 \be
 P(A,\beta,\gamma)
 =A^4+(A\,\cosh \beta-1)^2-3+2\,A\,\sinh \beta \cos \gamma=0\,.
 \label{constr}
 \ee
In the limit $\beta \to 0$ this relation
degenerates to the
(implicit) definition of the correct
EP4 root.
In its vicinity,
once we choose $A^{(EP4)}= 1.683771565$,
the curve $P(A,0,0)$ can be shown to
grow as a function of $A$.
The inclusion of the small
corrections
${\cal O}(\beta^2)$ and
${\cal O}(\beta\,\gamma^2)$ only shifts the curve
slightly upwards.
This forces the root of Eq.~(\ref{constr})
to move slightly to the left,
i.e.,
the
value of $A$ (i.e., of
the matrix element
of the regularized potential) only
gets slightly smaller and stays real.
In parallel,
also the related value of $B$ remains real.
These observations can be rephrased as the proof of
existence of the $(\beta,\gamma)-$parametrized
unitarity-preserving corridor in which the
EP4 degeneracy can unfold.
\end{proof}

%
\begin{figure}[h]                    
\begin{center}                         
\epsfig{file=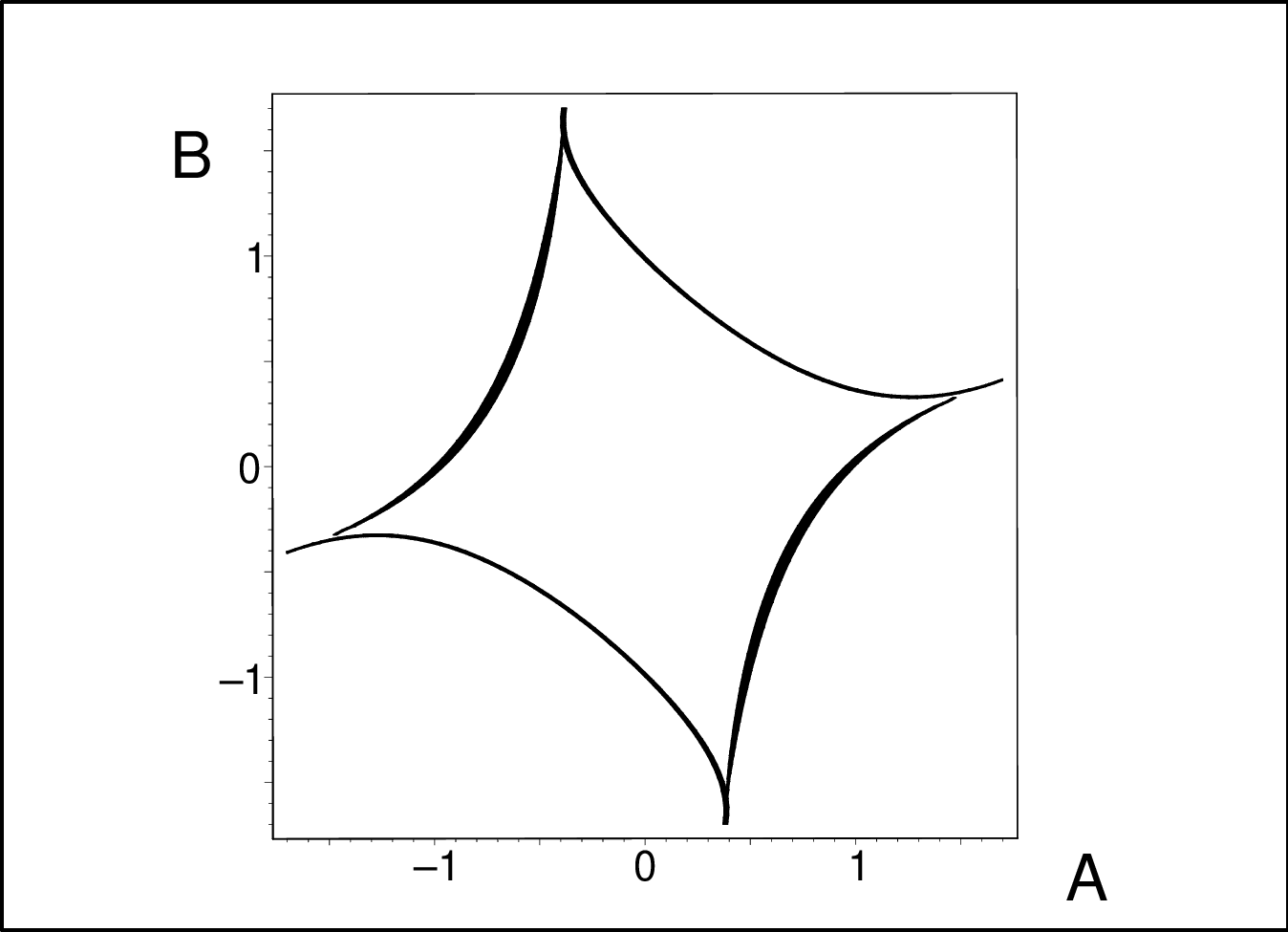,angle=270,width=0.35\textwidth}
\end{center}    
\caption{Two-dimensional physical open
domain ${\cal D}$ and its four maximal-non-Hermiticity
extremes ($N=4$, numerical construction).
 \label{urlobe}}
\end{figure}

We may conclude that near the EP4 singularity, i.e.,
in the dynamical regime
characterized by the maximal non-Hermiticity
of the potential
the corridor of admissible unitary unfoldings of the singularity
is determined by
a two-parametric perturbation
of the singular EP4 limit.
Near one of the four EP4 singularities
the sample of such a shape is displayed in
Figure \ref{lobe} where the inequalities
  $$
  A \gtrapprox A^{(EP4)}_-\approx  -1.683771565\,,\ \ \ \ \
B \gtrapprox  B^{(EP4)}_-\approx - 0.4060952085\,
 $$
define the
physical
corridor
locally.
One might only add and emphasize
that the picture clearly demonstrates
that such a corridor becomes truly narrow
near its maximal non-Hermiticity EP4 extreme.
Naturally, the same feature of ${\cal D}$
can be expected to be observed at any $N$.

Inside the corridor the quantum system is endowed with
the conventional probabilistic and unitary-evolution
interpretation.
What remains to be analyzed and described
is the form of the rest of the physical domain ${\cal D}$
which lies far from the singularity.
The task
remains elementary
and is left to the readers as an exercise.
Its result is displayed in
Figure \ref{urlobe}
showing that at $N=4$
the physical domain has a shape
of a deformed square
with protruded, spike-shaped vertices.

\subsection{$N=5$
\label{rerelocika}}

With the growth of $N$ the implementation of the
construction strategy based
on the brute-force elimination of the first parameter
${A}={A}(B,C,\ldots)$ from the second item
in the hierarchy of equations (i.e., from
Eq.~(\ref{rucha}) at $N=4$,
etc) appears naive. A more advanced
approach (viz., the application of systematic
Groebner-based elimination technique)
has to be used at the larger $N>4$.
Even at $N=4$ this would  \textcolor{black}{simplify}
the construction
and yield, naturally, the same
numerical value of $B^{(EP4)}$ as a (unique)
positive and real
root of polynomial $B^4-2\,B^3-2\,B^2+6\,B-2$.

Once we move to the next model with $N=5$, the choice of
Hamiltonian
$$
 H^{(5)}=\triangle^{(5)}+V^{(5)}(A,B)
 =\left[ \begin {array}{ccccc} -iA&-1&0&0&0
 \\{}-1&-iB&-1&0&0\\{}0&-1&0&-1&0\\{}0&0&-1&iB&-1
\\{}0&0&0&-1&iA\end {array} \right]
$$
leads to the secular equation
$${{\it E}}^{5}- \left( 4-{A}^{2}-{B}^{2} \right) {{\it E}}^{3}-
 \left( -3-2\,BA+2\,{A}^{2}-{B}^{2}{A}^{2} \right) {\it E}=0
$$
yielding the constant vanishing $E_0=0$ and
the remaining quadruplet of nontrivial levels
 $$
 E_{\pm,\pm}=\pm
 1/2\,\sqrt {8 \pm 2\,\sqrt {4-8\,{B}^{2}+{A}^{4}-2\,{B}^{2}{A}^{2}+{B}^{4}
 -8\,BA}-2\,{A}^{2}-2\,{B}^{2}}\,.
 $$
Once we fix $B^{}= \sqrt{4-A^2}$
we get a reduced secular equation
$$
{{\it E}}^{}= -3-2\,\sqrt {4-{A}^{2}}A-2\,{A}^{2}+{A}^{4}
$$
etc. Again, a more efficient
Groebner-elimination technique leads to
the identification of
$B^{(EP5)}$ with the root of polynomial
 $$
 B^4-2\,B^3-4\,B^2+6\,B+5\,.
 $$
The result is $B^{(EP5)}= 0.6683178062$
yielding
$A^{(EP)}=1.885033504 $. The EP5 degeneracy
becomes
confirmed by the verification of relation
 $$
 Q^{(5)}J^{(5)}=  H^{(EP5)}\,Q^{(5)}
 $$
i.e.,
by the criterion
as provided by
Eq.~(\ref{relaJ}) above.

The rest of the $N=5$ construction remains analogous
to its $N=4$ predecessor.

\section{The emergence of ambiguities at $N\geq 6$
\label{rererelocika}}

The first serious symbolic-manipulation challenges
emerge at $N=6$.
The Hamiltonian
varies with the triplet of parameters $A,B$ and $C$
so that the
secular polynomial is of the third order in $E^2$.
The energies are obtainable in closed form. Nevertheless,
the formulae (a.k.a. Cardano formulae)
are far from compact so that
for the majority of purposes
we recommend
to use the brute-force
numerical localization of the bound-state energy roots.

The parallel
task of the localization
of the three real and positive EPN-generating coupling constants
$A^{(EP6)},B^{(EP6)}$ and $C^{(EP6)}$ in $H^{(EP6)}$
remains tractable non-numerically because it may
be reduced to the coupled triplet of polynomial algebraic equations
 $$
-5+{A}^{2}+{B}^{2}+{C}^{2}=0\,,\ \ \ \
 6-{B}^{2}+2\,BA-2\,{C}^{2}-3\,{A}^{2}+{B}^{2}{A}^{2}+{C}^{2}{A}^{2}
 +2\,CB+{C}^{2}{B}^{2} =0\,,\ \ \
 $$
 $$
 -1+{C}^{2}-2\,BA+2
 \,CA+2\,{C}^{2}BA+{A}^{2}+2\,CB{A}^{2}-{B}^{2}{A}^{2}+{B}^{2}{C}^{2}{A}^{2}
 =0\,.
 $$
The brute-force numerical
search for the roots of this system
appeared inefficient.
In contrast, the application of the
computer-assisted algebraic
Groebner elimination technique proved
straightforward. It
led to the result
with an amazing computing-time efficiency.

Two features of
the construction should be mentioned.
First,
the determination of the
real and positive value of $B^{(EP6)}$
appeared reduced to the localization of a root
of a fairly complicated Groebnerian polynomial
$ P^{(N)}(B)$
having an unexpectedly high degree,
 $$
 P^{(6)}(B)=
 B^{23}-2\,B^{22}-20\,B^{21}+32\,B^{20}+
 188\,B^{19}-216\,B^{18}-1060\,B^{17}
 +768\,B^{16}+
 $$
 $$
 +3782\,B^{15}-1308\,B^{14}-8492\,B^{13}+16\,B^{12}+11164\,B^{11}
 +4008\,B^{10}-6668\,B^{9}-7072\,B^{8}-703\,B^{7}+
 $$
 $$
 +5678\,B^6
 +2320\,B^5-1200\,B^4-1248\,B^3-96\,B^2+160\,B+32\,.
 $$
Another surprise was that
the results ceased to be unique.
Polynomial $P^{(6)}(B)$ appeared to provide
two alternative real values  for the eligible
EP6-supporting parameter, viz,
$B^{(EP6)}_{expected}= 0.8635733388$ and
$B^{(EP6)}_{unexpected}=0.4333101655$.
In a way guided by our initial model-building
concept of discretization of the continuous
ICO potential we decided to prefer
the $N-$dependence of the critical parameters
of our immediate interest which would be monotonous
rather than oscillatory.

\begin{table}[h]
\caption{Real and positive EPN-supporting parameters
at the first few $N$.}
\vspace{0.21cm}
 \label{Pwe2}
\centering
\begin{tabular}{||c||c|c|c|c|c||}
 \hline \hline
 $N$
 &$2$
 &$3$
 &$4$
 &$5$
 &$6$
 \\
 \hline \hline
 $A^{(EP6)}$
 &$1.000$
 &$1.414$
 &$1.684$
 &$1.885$
 &$2.046$
 \\
 \hline \hline
 $B^{(EP6)}$
 &-
 &-
 &$0.406$
 &$0.608$
 &$0.864$
  \\
 \hline \hline
 $C^{(EP6)}$
 &-
 &-
 &-
 &-
 &$0.261$
  \\
 \hline \hline
\end{tabular}
\end{table}

Having this requirement in mind we found it confirmed by
the first four columns of Table \ref{Pwe2}.
On this ground
we felt entitled to restrict our choice to
the larger root $B^{(EP6)}_{expected}$.
After the routine backward insertions we
obtained our ultimate numerical result
 $$
 A^{(EP6)}= 2.046061191\,,\ \ \
 B^{(EP6)}=0.8635733388\,,\ \ \
 C^{(EP6)}=0.2605285271\,.
 $$
It
fits very well our {\it ad hoc\,} monotonicity hypothesis.
Our requirement of
monotonicity
seems to be well founded, therefore, reflecting
a tacit reference to
the smooth ICO benchmark.

In this light, various extensions of our analysis
to the models with $N>6$
may be expected straightforward.
Marginally, let us add that
our hypothesis of monotonicity
could have also been introduced
just for the sake of brevity of our message.
Alternatively, our results
(and, in particular, the list of Table \ref{Pwe2})
could be, in principle, extended
to cover also the
oscillatory shapes of the
EPN-admitting
discrete imaginary potentials.

Even under the monotonicity hypothesis
there might exist
multiple families of
discrete imaginary potentials
simulating the IEP singularity via
its EPN analogue
at large $N$.
Such
a project seems
difficult but supported by
equation Nr.~17 of paper \cite{Siegl}:
The authors
proposed there the
existence of the
IEP anomalies
in a much broader class of
continuous
${\cal PT}-$symmetric alternatives to ICO.

\section{Conclusions\label{presumma}}

The
problem addressed in our present paper
concerned the existence and
the nature of correspondence between the
reality of the bound state energies and the
degree of non-Hermiticity of the potential,
i.e., of its deviation from the conventional
real one.
The answers have been sought via
the localization, shape and extremes
of boundaries of the
physical
domain ${\cal D}$ of the
unitarity-compatible  parameters $ A, B, \ldots$
specifying the potential.

We \textcolor{black}{believe} that
our constructive study of various
discrete versions
of Schr\"{o}dinger equation
\textcolor{black}{throws} new light upon the
absence of the Riesz basis features of the bound-state-like
wave functions
generated by the enigmatic but extremely elementary and
popular continuous ICO benchmark model (\ref{lico}).
We can speak about success because
our multiparametric
discrete imaginary-potential
quantum bound state model
(\ref{hamacek}) -- (\ref{aVto})
appeared sufficiently flexible, especially as a tool of
an illustration
of several typical
though still fairly counterintuitive
mathematical as well as phenomenological features of
non-Hermitian Hamiltonians with real spectra.

The main mathematical result of our study
should be seen in the demonstration,
for the strictly local (albeit discrete)
and purely imaginary potentials,
of the existence
and of a unitary-evolution accessibility
of the exceptional points of
the maximal order $N$.
Indeed,
once we accepted the
unbroken ${\cal PT}-$symmetry constraint
we found that
at the first few nontrivial Hilbert-space dimensions $N$
at least,
the real and
unitarity-compatible ``physical''
parameters $A, B, \ldots$ of the potentials
lie
inside an open domain ${\cal D}$,
the shape of
the boundaries of
which
admits a comparatively feasible description
in a maximally \textcolor{black}{non-Hermitian} dynamical regime,
i.e., near one of
the EPN singularities.

In a purely formal sense it has been found satisfactory
to see that the construction of the corresponding
``unphysical limit'' parameters $A^{(EPN)}, B^{(EPN)}, \ldots$
themselves
appeared either non-numerical (up to $N=5$) or,
with the ample assistance of
computer-assisted symbolic and numerical manipulations,
straightforward.
This being said,
another, slightly less gratifying
observation was that
the later values $A^{(EPN)}, B^{(EPN)}, \ldots$
had to be deduced from the roots of a
polynomial, the degree of
which appeared to grow very quickly with
the growth of the dimension $N$
of the Hilbert space in question.
This means that any extrapolation of our present
toy-model constructions
(with the discrete imaginary
potentials of maximal non-Hermiticity as
sampled in Table \ref{Pwe2})
to any hypothetical continuous-coordinate limit $N \to \infty$
does not seem to be feasible at present.

\newpage


\begin{thebibliography}{99}


\bibitem{BB}
C. M. Bender and S. Boettcher,
``Real Spectra in
Non-Hermitian Hamiltonians having PT Symmetry,''
Phys. Rev. Lett. {\bf 80},
5243--5246
(1998).


\bibitem{Carl}
C. M. Bender,
``Making Sense of Non-Hermitian Hamiltonians,''
Rep. Prog. Phys.  {\bf 70}, 947--1018 (2007).

\bibitem{DB}
D. Bessis, private communication (1992).

\bibitem{Siegl}
P. Siegl and D. Krej\v{c}i\v{r}\'{\i}k,
%
%
``On the metric operator for the imaginary cubic oscillator,''
Phys. Rev. D {\bf 86},  121702(R) (2012).
%



\bibitem{Uwe}
U. G\"{u}nther and F. Stefani,
``IR-truncated PT -symmetric $ix^3$ model and its asymptotic
spectral scaling graph,'' arXiv 1901.08526
(2019).



\bibitem{Dieudonne}
J. Dieudonn\'{e},
``Quasi-Hermitian Operators,'' in {Proc. Int. Symp.
Lin. Spaces}, (Pergamon: Oxford, UK, 1961), {pp. 115--122}.
%


\bibitem{Geyer}
F. G. Scholtz, H. B. Geyer, and F. J. W. Hahne,
``Quasi-Hermitian
Operators in Quantum Mechanics and the Variational Principle,''
Ann. Phys. (NY) {\bf 213},
74--101 (1992).
%



\bibitem{foundations}
M. Znojil,
``\textcolor{black}{The} intrinsic exceptional point -- a challenge in quantum theory,''
\textcolor{black}{Foundations {\bf 5},  8 (2025)}
(arXiv:2411.12501).




\bibitem{ali}
A. Mostafazadeh,
``Pseudo-Hermitian Representation of Quantum
Mechanics,''  
Int. J. Geom. Meth. Mod. Phys. {\bf 7},
1191--1306 (2010).



\bibitem{book}
%
\emph{Non-Selfadjoint Operators in Quantum Physics: Mathematical
Aspects}, edited by
F. Bagarello, J.-P. Gazeau, F. H. Szafraniec, and M. Znojil
(Wiley, Hoboken, NJ, 2015).



\bibitem{SIGMA}
M. Znojil,
``Three-Hilbert-space formulation of Quantum Mechanics,''
{Symm. Integ. Geom. Meth. Appl. (SIGMA)}   {\bf  5}, 001
(2009);
%
(arXiv: 0901.0700).


\bibitem{Kato}
T. Kato, {\em Perturbation Theory for Linear
Operators}
(Spinger, Berlin,
1966).



\bibitem{catast}
M. Znojil,
``Quantum catastrophes: a case study,''
{J. Phys. A: Math. Theor.}   {\bf 45},
444036  (2012).


\bibitem{SIGMAdva}
M. Znojil,
``On the role of the normalization factors $\kappa_n$ and
of the pseudo-metric P in crypto-Hermitian quantum models,''
{Symm. Integ. Geom. Meth. Appl. (SIGMA)}   {\bf  4}, 001
(2008);
(arXiv: 0710.4432v3).
%
%


\bibitem{Lotor}
D. Krej\v{c}i\v{r}\'{\i}k,
V. Lotoreichik, and M. Znojil,
``The minimally anisotropic metric operator in quasi-hermitian quantum
mechanics,''
{Proc. Roy. Soc. A: Math. Phys. Eng. Sci.}
  {\bf  474},
      20180264 (2018).



\bibitem{PLA}
M. Znojil,
``Non-Hermitian-Hamiltonian-induced unitarity and optional
physical inner products in Hilbert space,''
Phys. Lett. A  {\bf 523}, 129782 (2024).


\bibitem{CGbook}
E. Caliceti and S. Graffi,
``Criteria for the reality of the spectrum
of PT-symmeric Schr\"{o}dinger operators and for the existence
of PT-symmetric phase,''
in
F. Bagarello, J.-P. Gazeau, F. H. Szafraniec, and M. Znojil, Eds,
\emph{Non-Selfadjoint Operators in Quantum Physics: Mathematical
Aspects}
(Wiley, Hoboken, NJ, 2015).

\bibitem{passage}
M. Znojil,
``Passage through exceptional point: case study,''
%
%
Proc. Roy. Soc. A:
Math., Phys. \& Eng. Sci. {\bf  476},
  20190831
(2020).


\bibitem{Messiah}
A. Messiah,  \emph{Quantum Mechanics}
(North Holland, Amsterdam, The Netherlands, 1961).

\bibitem{corridors}
M. Znojil,
``Unitarity corridors to exceptional points,''
%
Phys. Rev. A {\bf  100},
032124
(2019).



\end{thebibliography}
\end{document}